\newfontfamily\devanagarifont[Script=Devanagari]{NewCM08Devanagari-Book.otf}
\newfontfamily\dumpfont{DejaVuSansMono.ttf}
\newtcolorbox{exampleblock}[2][]{%
  enhanced,
  breakable,
  colback=black!2,
  colframe=black!35,
  boxrule=0.5pt,
  arc=2mm,
  left=2.5mm,right=2.5mm,top=2mm,bottom=2mm,
  title=\textbf{#2},
  fonttitle=\normalsize,
  #1
}
\newtcolorbox{codebox}[2]{%
  enhanced,
  breakable,
  sharp corners,
  boxrule=0pt,
  frame hidden,
  left=3mm,right=3mm,top=2mm,bottom=2mm,
  colback=#1,
  coltitle=white,
  fonttitle=\bfseries,
  attach boxed title to top left={xshift=0mm,yshift*=-\tcboxedtitleheight},
  boxed title style={
    colback=black,
    boxrule=0pt,
    sharp corners,
    left=3mm,right=3mm,top=1.2mm,bottom=1.2mm,
  },
  title={#2},
}
\newtcolorbox{origbox}[1]{%
  enhanced, breakable, sharp corners,
  boxrule=0pt, frame hidden,
  left=3mm,right=3mm,top=2mm,bottom=2mm,
  colback=black!6,
  coltitle=white, fonttitle=\bfseries,
  attach boxed title to top left={xshift=0mm,yshift*=-\tcboxedtitleheight},
  boxed title style={colback=black, boxrule=0pt, sharp corners,
    left=3mm,right=3mm,top=1.2mm,bottom=1.2mm},
  title={#1},
}
\newtcolorbox{errbox}[1]{%
  enhanced, breakable, sharp corners,
  boxrule=0pt, frame hidden,
  left=3mm,right=3mm,top=2mm,bottom=2mm,
  colback=red!6,
  coltitle=white, fonttitle=\bfseries,
  attach boxed title to top left={xshift=0mm,yshift*=-\tcboxedtitleheight},
  boxed title style={colback=red!70!black, boxrule=0pt, sharp corners,
    left=3mm,right=3mm,top=1.2mm,bottom=1.2mm},
  title={#1},
}
\newtcolorbox{okbox}[1]{%
  enhanced, breakable, sharp corners,
  boxrule=0pt, frame hidden,
  left=3mm,right=3mm,top=2mm,bottom=2mm,
  colback=green!6,
  coltitle=white, fonttitle=\bfseries,
  attach boxed title to top left={xshift=0mm,yshift*=-\tcboxedtitleheight},
  boxed title style={colback=green!55!black, boxrule=0pt, sharp corners,
    left=3mm,right=3mm,top=1.2mm,bottom=1.2mm},
  title={#1},
}
\newcommand{\enc}{\mathrm{Enc}}
\newcommand{\dec}{\mathrm{Dec}}
\definecolor{olivedrab}{rgb}{0.42, 0.56, 0.14}
\definecolor{palatinatepurple}{rgb}{0.41, 0.16, 0.38}
\DeclareMathOperator{\kl}{{\scriptscriptstyle KL}}
\DeclareMathOperator{\tv}{{\scriptscriptstyle TV}}
\DeclareMathOperator{\condtv}{{\scriptscriptstyle CTV}}
\DeclareMathOperator{\softmax}{{\mathrm{softmax}}}
\DeclareMathOperator{\alphabet}{\cA}
\DeclareMathOperator{\LLM}{M}
\title{Synchronizing Probabilities in Model-Driven Lossless Compression}
\author{Aviv~Adler \\
Analog Garage \\
Analog Devices, Inc.\\
Boston, MA 02110, USA \\
Aviv.Adler@analog.com \\
\And
Jennifer~Tang\\
Department of Mathematics 
\& Computer Science \\
College of the Holy Cross \\
Worcester, MA 01610, USA \\
jtang@holycross.edu \\
}
\begin{document}

\maketitle

\begin{abstract}
%Classical lossless compression algorithms such as Huffman coding or arithmetic coding rely on using probabilities of symbols to achieve compression rates near entropy. Deep neural networks are a natural choice for estimating these probabilities, since they can represent deeper and more varied structures in data. However, this requires both the compressor and decompressor to use the same model to predict the next token from context. Since neural nets are often slightly nondeterministic due to low-level variations in software, hardware, or order of computations, the computed probabilities may differ slightly at the compressor and decompressor ends, which could derail the entire compression algorithm. In this work, we formalize the problem of model-driven compression under prediction mismatch, and introduce the Probability-Synchronized Arithmetic Coding (PMATIC) algorithm which corrects bounded prediction mismatch efficiently. PMATIC can be combined directly with any predictive model (in language or vision or other domain) to create a mismatch-tolerant predictive compression algorithm.
It is well-known in the field of lossless data compression that probabilistic next-symbol prediction can be used to compress sequences of symbols. Deep neural networks are able to capture rich dependencies in data, offering a powerful means of estimating these probabilities and hence an avenue towards more effective compression algorithms. However, both compressor and decompressor must have exactly matching predictions; even small differences from non-determinism (which often happen with learned models due to hardware, software, or computation order) can lead to cascading decoding failures. In this paper, we formalize the problem of prediction mismatch in model-driven compression, and introduce Probability Matching Interval Coding (PMATIC), a model-agnostic algorithm that tolerates bounded prediction mismatch with low overhead. PMATIC works with the predicted probabilities, making it compatible as a drop-in replacement for the arithmetic encoder in model-driven compression tools. We show theoretical correctness and performance bounds for PMATIC, and validate these results on text data. These results confirm that, when paired an advanced prediction model, PMATIC is robust to prediction mismatch while achieving compression rates that out-perform standard modern compression tools. %we have experimental results on text compressions which shows that PMATIC is capable of being robust to numerical discrepancies in the compressor and decompressor computations while achieving compression rates that out-performs the current best compressors. 
\end{abstract}

\setlength{\belowcaptionskip}{-10pt} % Adjust the negative value as needed

\section{Introduction}
\label{sec::intro}

\subsection{Model-Driven Lossless Compression}

A key task in modern information systems is data compression, the process of reducing the size of text, images, video, or other data so it can be stored and transmitted more efficiently. In lossless compression, the data is encoded into a compact representation from which the original can be decoded exactly, in contrast to lossy compression, which only permits approximate reconstruction. Compression is generally formalized as the problem of encoding a string of discrete symbols drawn from a finite alphabet. In deep learning contexts, these symbols are often referred to as tokens. The choice of symbols is domain-dependent: for text, tokens are typically subword units or characters; for images, they may correspond to pixel intensities, color values, or transformed coefficients; and for other domains, analogous discrete representations are used.

Lossless compression works by exploiting regularities in the data: common patterns are assigned shorter codes, while rare patterns receive longer ones. These regularities may reflect simple statistics, such as symbol frequencies, or more complex and context-dependent structure and even semantic information. From this perspective, any lossless compression method implicitly defines a probabilistic model of the data source, with compression effectiveness depending on how well the model matches the true distribution. 
Some algorithms make this explicit, using predictive models that estimate the probability of each symbol given its context [\cite{original-model-driven-compression}] in order to generate the code; we refer to such algorithms as \emph{model-driven}. Others, such as Lempel–Ziv–Welch (LZW), ZIP, or bzip2, achieve their gains through dictionary-building or transforms, but nonetheless rely on an implicit statistical model of the domain.\footnote{They can even be used to create explicit predictive models [\cite{deletang_2024}].}

%Model-driven compression works by using the predictive model to sequentially generate next-symbol probability distributions that are then used to encode and decode the message; since the predicted probabilities play a crucial role in defining and optimizing the code, the same predictive model must be available at both the encoder and decoder ends. 

In model-driven compression, the message is encoded sequentially, and for each symbol the model makes a probabilistic prediction based on the context of the prior symbols to help the encoder efficiently allocate bits to potential outcomes.
To convert the predictive model into a compression algorithm, the standard technique is to pair the model with arithmetic coding [\cite{pasco1976, rissanen1976, guazzo1980}]. Arithmetic coding represents an entire message as a subinterval of $[0,1)$, successively narrowing the interval according to the predicted probabilities of each symbol. More probable symbols shrink the interval less and thus yield shorter average descriptions, while less probable symbols shrink it more and thus require more bits. Unlike Huffman coding [\cite{Huffman1952}] (another commonly used technique), arithmetic coding adapts particularly well to changing and context-dependent probabilities for each symbol. If the model closely reflects the true distribution of the data, arithmetic coding yields compression rates approaching the information-theoretic limit. However, it is extremely sensitive numerically and vulnerable to cascading errors. 

Model-driven lossless compression has a long history, arguably going back to %Huffman coding [\cite{Huffman1952}] 
\cite{shannon1948}, where Shannon tallies frequencies of characters in English, building first, second, and third order Markov model predictions. The arithmetic coding approach for model-driven compression was discussed by \cite{original-model-driven-compression} and further developed with a number of statistical or learned predictive models across many domains. Many of these models focus on deriving the prediction model from only the previously seen encoded symbols. In \cite{schmidhuber_1996}, it is clarified that ``offline'' models are those trained on a separate files and model parameters are shared among all machines responsible for encoding and decoding. In contrast ``online'' models use the current file to update predictions. \cite{schmidhuber_1996} use offline neural networks and get competitive compression ratios. 
Many other works since, \cite{knoll2014, cox2016, goyal2018deepzip, bellard2019, liu2019decmac} have used LSTMs and other recurrent neural networks as predictive models. Transformers were used as the predictive model in \cite{bellard2019, bellard2021, mao2022trace}.

%\item Knoll (2014) NOT PAPER, WEBSITE – LSTM, an ensemble of predictors combined together
%\item Cox (2016) – Recurrent Neural Network
%\item DeepZip: Goyal et al (2018) – Recurrent Neural Network (trains on the sequence for multiple epochs). Also says``Secondly, the probability prediction block needs to be perfectly symmetric so as to get back the same probability distribution at the decoder, guaranteeing successful reconstruction of the encoded sequence.'' 
%\item Liu et al (2019) - LSTM
%\item Bellard (2019) – LSTM and transformers, does not mix models, sees data only once

This general arithmetic coding-based lossless compression technique, particularly when paired with modern neural network-driven predictive models, has been shown to have significant promise in numerous domains beyond text compression. These domains include lossless image compression \cite{lstm-images-2016,schiopu2018cnn,mentzer2019practical,rhee2022lc, chen2024large}, compression of large numerical datasets such as time-series power data \cite{lstm-power-data}, and neural network checkpoints \cite{kim2025-lstm-nn-checkpoints}.

The incredible success of modern neural networks, particularly transformers, for natural language processing has led to increased interest in using the model-driven approach to create more powerful and context-adaptive codes for natural language compression. Recent work by \cite{deletang_2024} shows that offline model-driven compression using modern models such as Llama 2 or Chinchilla with arithmetic coding can deliver significant improvement over state-of-the-art lossless compression algorithms across domains including text and vision. Concurrently, LLM-driven text compression tools such as LLMZip [\cite{valmeekam2023llmzip}] and llama-zip [\cite{llama-zip}] were introduced to take advantage of the capabilities of these advanced models.

\subsection{LLM Non-Determinism and Prediction Mismatch}

Despite its promise, LLM-driven compression faces serious practical obstacles. For instance, the LLM inference pipeline must run for each token during the encoding and decoding steps, which can make the process prohibitively slow, since large language models are often computationally expensive to execute. This also requires the model, which may contain many gigabytes of parameters, to be stored, thus adding a large overhead cost in memory as well. 
%
%However, this is less of a problem if there is a large amount of data to compress.
%
Recent work has also been done to address concerns about the computational performance of LLM-driven compression, such as \cite{mittu2024finezip} on improving speeds for LLMZip.%\cite{valmeekam2023llmzip}.

Another significant challenge, which we call \emph{prediction mismatch}, arises when compressed data is transmitted between an encoder and decoder running on different machines. As noted in \cite{Witten1987}, arithmetic coding with adaptive probability models, ``It must be possible for the decoder to produce exactly the same probability distribution in the same context". Achieving this is difficult with modern machine learning models due to \emph{non-determinism}.

Non-determinism in the setting of machine learning and scientific computing means that multiple runs of the same program with identical inputs (and identical random seeds) can produce different outputs [\cite{Cooper_2022, semmelrock2025}]. One source of non-determinism occurs in GPU hardware: floating point operations which are performed in a different order may result in different outcomes due to rounding. These small numerical deviations, in a full inference pipeline run, can cascade into large differences in what a model predicts [\cite{Shanmugavelu2025, Chen_2022}]. GPU libraries, like CUDA and cuDNN, state specifically in their documentation that they do not guarantee determinism or reproducibility in many circumstances, such as when different versions or architectures are used %, or when in some cases, when multiple streams are used 
[\cite{nvidia2025cublas, nvidia2025cudnn}]. The effects of non-determinism in CUDA is studied in \cite{Eryilmaz2024} where they note that non-determinism is likely to remain in CUDA because of the runtime benefits CUDA gains through using parallelism. Non-determinism in GPUs have also been examined and explored by \cite{morin2020}. \cite{Coakley_2022} and \cite{atil2025llmdeviation} examined the issue of non-determinism experimentally, finding significant variability, and \cite{schlogl2023causesNumerical} study its causes.

Applying arithmetic coding directly under these conditions is usually immediately fatal: even subtle differences in the encoder and decoder probability distributions can result in an incorrectly decoded token, which then cascades to the rest of the message as it changes the context of subsequent tokens.

If the mismatch between the encoder and decoder distributions is arbitrary, recovery is impossible, as the decoder predictions yield no information about the encoder predictions. However, if the mismatch is known to be small, the encoder and decoder can exploit this closeness to reach exact agreement on a third probability distribution. This robustness incurs a cost in compression efficiency: the encoder will generally have to send extra information to ensure agreement, and the agreed probability distribution may be less accurate than the original predictions. We refer to the problem of constructing a shared distribution while minimizing the cost as \emph{probability matching}.

Concurrent work [\cite{cordelia2026}] addresses a closely related variant of mismatch-tolerant coding by proposing an alternative algorithm based on a Huffman coding-like approach.

Recent work has also explored the possibility of addressing non-determinism at its source by designing deterministic neural network backends, motivated by the need for researchers to share reproducible results [\cite{he2025nondeterminism}, \cite{yuan2025mitigating}], though at a cost in performance. Mitigation of non-determinism can be complementary with our approach by reducing mismatches so that they fall within the tolerance limits of a robust coding algorithm.

\subsection{Contributions}
\label{sec::contributions}

This work introduces the problem of robust coding for prediction mismatch in model-driven lossless compression and proposes \emph{PMATIC} (Probability-Matched Interval Coding) to address it. PMATIC is designed to convert any predictive model into a compression algorithm which is robust to bounded prediction mismatch, and to be a drop-in replacement for arithmetic coding in model-driven compression. This work also shows the following results:
\begin{itemize}
    \item \emph{Theory:} We prove that PMATIC guarantees correct decoding under a simple and general model of bounded prediction mismatch (\Cref{sec::bounded-output-mismatch}), and give theoretical bounds on the cost incurred to ensure this robustness.
    \item \emph{Practice:} We demonstrate experimentally that LLM-driven compression using PMATIC achieves compression ratios significantly better than current standard methods, while remaining robust to prediction mismatch.
\end{itemize}
%To our knowledge, this is the first work to explicitly address LLM non-determinism and prediction mismatch as fundamental obstacles to model-driven compression. 
In \Cref{sec::experiments}, we validate our approach by applying PMATIC on text data in the presence of both real and synthetic prediction mismatch, and give the first proof of concept that non-determinism in model-driven compression can be addressed via robust coding algorithms.

\section{Problem Statement}
\label{sec::problem-statement}

%We now formalize probability matching as an optimization problem.

Consider the case where an encoder and decoder are using the same model (such as a specific LLM with the same weights) to compute next-token probabilities over an input string $\bx = x(1) x(2) \dots x(n)$ whose entries $x(i)$ are taken from a finite alphabet $\cA$ of possible symbols. We use the following notation: the $i$-symbol prefix of $\bx$ is denoted as $\bx^i := x(1) \dots x(i)$; the set of all finite strings drawn from the alphabet $\alphabet$ is denoted as $\alphabet^* := \bigcup_{i \geq 0} \alphabet^i $.%; and the set of all probability distributions over the alphabet $\alphabet$ is denoted $\triangle_{\alphabet} := \{\bp \in \bbR^{\alphabet} : \bp \geq \bzero, \sum_{k \in \alphabet} p_k = 1\}$. 

Typically, an LLM computes its next-token probabilities by computing a real-valued (or, rather, floating-point valued) weight, called a \emph{logit}, for each outcome and then applying the \emph{softmax} function to the vector of logits.\footnote{For simplicity we use the standard softmax with a `temperature' parameter of $1$.} %, but the algorithm and analysis can be easily adjusted for non-$1$ temperature.}.
Let functions $\LLM^{\enc}, \LLM^{\dec}: \alphabet^* \to \bbR^{\cA}$ take a string of symbols (the context) and return a logit value for each symbol in the alphabet, representing what happens when the model is run for inference at the encoder and decoder ends, respectively.
We denote the predictions of $\LLM^{\enc}, \LLM^{\dec}$ for token $i$ as logit vectors
\begin{align}
    \bu(i) := \LLM^{\enc}(\bx^{i-1}) \quad \text{and} \quad \bv(i) := \LLM^{\dec}(\bx^{i-1})
\end{align}
which respectively induce probability vectors $\bp(i) = \softmax(\bu(i))$ and $\bq(i) = \softmax(\bv(i))$, i.e. for any $i \in [n]$ and $k \in \alphabet$,
\begin{align}
    p(i)_k = \softmax(\bu(i))_k := \frac{e^{u(i)_k}}{\sum_{j \in \alphabet} e^{u(i)_j}} ~~ \text{and} ~~ 
    q(i)_k = \softmax(\bv(i))_k := \frac{e^{v(i)_k}}{\sum_{j \in \alphabet} e^{v(i)_j}}.
\end{align}
When the token number $i$ is fixed, we may drop it from the notation for clarity, so that the encoder and decoder return logit vectors $\bu, \bv$ which induce probability distributions $\bp, \bq$ respectively.

%Our goal is to produce a compression algorithm (encoder and decoder) which can take advantage of the powerful prediction capabilities of the LLM and correct the prediction mismatch problem with minimal `additional' information per token.

We denote the encoding and decoding algorithms (also called compressing and decompressing, respectively) as functions whose operation depends on an LLM model ($\LLM^{\enc}$ and $\LLM^{\dec}$ respectively) as well as on more traditional inputs. Specifically, the encoder takes LLM $\LLM^{\enc}$ and input token string $\bx$ and returns a bitstring $\bb$ which is the encoded input. Then, the decoder takes $\bb$ and its own LLM $\LLM^{\dec}$ and returns a decoded string $\hat{\bx}$:
\begin{align}
    \enc(\LLM^{\enc}; \bx) = \bb \quad \text{ and } \quad \dec(\LLM^{\dec}; \bb) = \hat{\bx} \,.
\end{align}
Given a constraint on the difference between $\LLM^{\enc}$ and $\LLM^{\dec}$'s outputs on any given context, we say that the algorithm is \emph{mismatch-tolerant} with respect to that constraint if, for all $\LLM^{\enc}, \LLM^{\dec}$ that satisfy the constraint: $\dec(\LLM^{\dec}; \enc(\LLM^{\enc}; \bx)) = \bx$ for all $\bx$.
The goal is to design algorithms which can tolerate a given amount of mismatch between the encoder and decoder probability distributions while minimizing the cost in compression efficiency.

\subsection{The Bounded Prediction Mismatch Setting} 
\label{sec::bounded-output-mismatch}

 %We consider a scenario where the encoder and decoder use the same LLM and context to compute a vector of logits over alphabet $\alphabet$. We call the logit vectors computed by the encoder and decoder (respectively) $\bu = (u_1, \dots, u_K)$ and $\bv = (v_1, \dots, v_K)$, and the probability distributions induced by $\bu$ and $\bv$ as (respectively) $\bp = (p_1, \dots, p_K) = \softmax(\bu)$ and $\bq = (q_1, \dots, q_K) = \softmax(\bv)$, where
%\begin{align}
    %p_k = \softmax(\bu)_k := \frac{e^{u_k}}{\sum_{j \in \alphabet} e^{u_j}} ~~ \text{and} ~~ 
    %q_k = \softmax(\bv)_k := \frac{e^{v_k}}{\sum_{j \in \alphabet} e^{v_j}} ~ \text{ for all } k \in \alphabet.
%\end{align}
Since the encoder and decoder are using the same LLM on the same inputs, it is reasonable to assume that they obtain logits whose difference is bounded by some reasonably small $\varepsilon > 0$. A natural choice for this is to assume that their difference has bounded $L_\infty$ norm (i.e. elementwise): $\norm{\bu-\bv}_\infty := \max_{k \in \alphabet} |u_k - v_k| \leq \varepsilon$. We first define a measure of difference between two probability distributions:

%We then introduce the following definition:
\begin{definition}
    The \emph{conditional total variation distance} ($d_{\condtv}$) between two probability distributions $\bp, \bq$ on an alphabet $\alphabet$ is defined as the maximum total variation distance ($d_{\tv}$) of $\bp$ and $\bq$ after conditioning on some (nonempty) $S \subseteq \alphabet$, i.e.
    \begin{align}
        d_{\condtv}(\bp,\bq) := \max_{\emptyset \neq S \subseteq \alphabet} d_{\tv}(\bp(\cdot|S), \bq(\cdot|S))
    \end{align}
    where $\bp(\cdot|S)$ and $\bq(\cdot|S)$ are, respectively, $\bp$ and $\bq$ conditioned on the outcome being in $S$. 
\end{definition}
Note that there is no divide-by-zero issue with conditioning on any (nonempty) $S$ since all probabilities are induced via the softmax function and hence strictly positive. %This notion (which we apply to the TV distance) is sometimes called the \emph{maximum subgroup discrepancy distance} (MSDD) (\cite{maxsubgroupdiscrepancy}). 
Bounded prediction mismatch then bounds conditional TV distance:
\begin{proposition} \label{prop::logit-linfty-bounds-condtv}
    If $\bu, \bv$ induce probability distributions $\bp = \softmax(\bu)$ and $\bq = \softmax(\bv)$ over $\alphabet$, and $\norm{\bu-\bv}_\infty \leq \varepsilon$, then $d_{\condtv}(\bp, \bq) \leq \frac{\varepsilon}{2}$.
\end{proposition}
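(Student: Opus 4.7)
The plan is to decouple the conditional structure of $d_{\condtv}$ from the analytic heart of the statement. For any nonempty $S \subseteq \alphabet$, the conditional distribution $\bp(\cdot \mid S)$ is exactly the softmax of the restricted logit vector $\bu_S := (u_k)_{k \in S}$, since $p(k \mid S) = e^{u_k}/\sum_{j \in S} e^{u_j} = \softmax(\bu_S)_k$; likewise for $\bq(\cdot \mid S)$. Restriction cannot increase the $L_\infty$ norm, so $\|\bu_S - \bv_S\|_\infty \leq \|\bu - \bv\|_\infty \leq \varepsilon$. Hence the claim reduces to a single unconditional statement: whenever $\bu, \bv$ are logit vectors over a finite alphabet with $\|\bu - \bv\|_\infty \leq \varepsilon$, then $d_{\tv}(\softmax(\bu), \softmax(\bv)) \leq \varepsilon/2$. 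Taking the maximum over $S$ will then finish.

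For the unconditional bound, I would integrate along the straight-line path $\bu(t) := (1-t)\bu + t\bv$ for $t \in [0,1]$, setting $\bp(t) := \softmax(\bu(t))$. Using the softmax Jacobian $J(\bu) = \mathrm{diag}(\bp) - \bp\bp^\top$, the coordinate $\dot p_k(t)$ equals $p_k(t)\bigl((v_k - u_k) - \EE_{\bp(t)}[\bv - \bu]\bigr)$, so $\|\dot\bp(t)\|_1$ equals the mean absolute deviation of the scalars $\{v_k - u_k\}_k$ under the probability measure $\bp(t)$. By the fundamental theorem of calculus and the triangle inequality,
\begin{align}
d_{\tv}(\bp, \bq) \;=\; \tfrac{1}{2}\|\bp(1) - \bp(0)\|_1 \;\leq\; \tfrac{1}{2}\int_0^1 \|\dot\bp(t)\|_1 \, dt.
\end{align}

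The remaining substantive ingredient, and where the main obstacle lies, is an elementary lemma about bounded random variables: for any probability measure $\mu$ supported on $[-\varepsilon, \varepsilon]$, the mean absolute deviation $\EE_\mu[|X - \EE_\mu X|]$ is at most $\varepsilon$, with equality achievable by the symmetric two-point distribution on $\{-\varepsilon, \varepsilon\}$. This is a standard extremal calculation (for fixed mean $m$ the extremizer is a two-point distribution on the endpoints, giving MAD $\leq (\varepsilon^2 - m^2)/\varepsilon \leq \varepsilon$), but it is the step where the sharp factor of $\tfrac{1}{2}$ in the final bound actually emerges. Applying this uniformly over $t \in [0,1]$ gives $\|\dot\bp(t)\|_1 \leq \varepsilon$ everywhere along the path, which combined with the display above yields $d_{\tv}(\bp, \bq) \leq \varepsilon/2$ and hence $d_{\condtv}(\bp, \bq) \leq \varepsilon/2$.
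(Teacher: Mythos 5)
Your proof is correct, and it takes a genuinely different route from the paper's. Both begin by reducing the conditional TV distance to an unconditional statement, but you do so via the cleaner observation that $\bp(\cdot \mid S) = \softmax(\bu_S)$, so restricting to $S$ preserves the $L_\infty$ bound on the logit difference, and it suffices to bound $d_{\tv}(\softmax(\bu), \softmax(\bv))$ once and for all. The paper instead rewrites $d_{\condtv}$ as $\max_{S^* \subseteq S} |p(S^* \mid S) - q(S^* \mid S)|$, fixes $S^*, S$, and argues that the worst case over $\norm{\bu - \bv}_\infty \leq \varepsilon$ is achieved by pushing logits in $S^*$ up by $\varepsilon$ and logits in $S \setminus S^*$ down by $\varepsilon$, then optimizes the resulting one-dimensional expression to get the tight bound $\tanh(\varepsilon/2) \leq \varepsilon/2$. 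For the analytic core you replace this explicit extremal computation with a path-integral argument: differentiating $\softmax$ along the segment from $\bu$ to $\bv$, computing $\dot p_k(t) = p_k(t)\bigl(d_k - \EE_{\bp(t)}[d]\bigr)$ with $d_k = v_k - u_k$, identifying $\|\dot\bp(t)\|_1$ as the mean absolute deviation of the $d_k$'s under $\bp(t)$, and invoking the elementary bound that MAD $\leq \varepsilon$ for any law on $[-\varepsilon,\varepsilon]$ (your chord argument $\EE|X - m| \leq (\varepsilon^2 - m^2)/\varepsilon \leq \varepsilon$ is right). Your version is more modular and arguably more transparent about where the factor $\tfrac{1}{2}$ comes from (the MAD extremal); the paper's version is more hands-on but yields the slightly sharper constant $\tanh(\varepsilon/2)$ for free. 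Both establish the claimed $\varepsilon/2$.
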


\begin{proof}
    Using the definition of TV distance, the conditional TV distance can be rewritten as
    \begin{align}
        d_{\condtv}(\bp,\bq) &= \max_{\substack{\emptyset \neq S \subseteq \alphabet \\ S^* \subseteq S}} |p(S^*|S) - q(S^*|S)|
        \\ \implies \max_{\norm{\bu - \bv}_\infty \leq \varepsilon } d_{\condtv}(\bp,\bq) &= \max_{\norm{\bu - \bv}_\infty \leq \varepsilon } ~\max_{\substack{\emptyset \neq S \subseteq \alphabet \\ S^* \subseteq S}} |p(S^*|S) - q(S^*|S)|
        \\ &= \max_{\substack{\emptyset \neq S \subseteq \alphabet \\ S^* \subseteq S}} ~\max_{\norm{\bu - \bv}_\infty \leq \varepsilon }  |p(S^*|S) - q(S^*|S)|
    \end{align}
    So, if $\max\limits_{\norm{\bu - \bv}_\infty \leq \varepsilon }  |p(S^*|S) - q(S^*|S)| \leq \frac{\varepsilon}{2}$ for all $S^* \subseteq S \subseteq \alphabet$, then $\max\limits_{\norm{\bu - \bv}_\infty \leq \varepsilon } d_{\condtv}(\bp,\bq) \leq \frac{\varepsilon}{2}$.
    
    In other words, the conditional TV distance between $\bp, \bq$ induced by $\norm{\bu-\bv}_\infty \leq \varepsilon$ can be bounded by first fixing $S^* \subseteq S \subseteq \alphabet$ and then bounding $|p(S^*|S) - q(S^*|S)|$ over all $\bp, \bq$ whose logits are within $\varepsilon$ of each other in $L_\infty$ distance.
    We assume WLOG that the $\bu, \bv$ maximizing $|p(S^*|S) - q(S^*|S)|$ has $q(S^*|S) > p(S^*|S)$ (otherwise $S^*$ can be changed into $S \backslash S^*$), so the goal is to maximize $q(S^*|S) - p(S^*|S)$ given the logit $L_\infty$ bound. This is achieved by letting
    \begin{align} \label{eq::max-ctv-logits}
        v_k = u_k + \varepsilon \text{ for } k \in S^* \quad \text{and} \quad u_k - \varepsilon \text{ for } k \not \in S^*
    \end{align}
    %\begin{align} \label{eq::max-ctv-logits}
        %v_k = \begin{cases} u_k + \varepsilon &\text{for } k \in S^* \\ u_k - \varepsilon &\text{for } k \not \in S^* \end{cases}
    %\end{align}
    Let $p^* := p(S^*|S)$ and $q^* := q(S^*|S)$, which are both scalars in $[0,1]$. Then, given \eqref{eq::max-ctv-logits},
    \begin{align}
        q^* &= \frac{\sum_{k \in S^*} e^{u_k + \varepsilon}}{\sum_{k \in S^*} e^{u_k + \varepsilon} + \sum_{k \in S \backslash S^*} e^{u_k - \varepsilon}} = \frac{e^{\varepsilon} p^*}{e^{\varepsilon} p^* + e^{-\varepsilon}(1-p^*)}
        \\ \implies q^* - p^* &\leq \max_{p^* \in [0,1]} \Big( \frac{e^{\varepsilon} p^*}{e^{\varepsilon} p^* + e^{-\varepsilon}(1-p^*)} - p^* \Big) = \tanh\Big(\frac{\varepsilon}{2}\Big) \leq \frac{\varepsilon}{2}\,. \label{eq::proof-last-step}
    \end{align}
    %
    %Then (since $e^{\varepsilon} > 1$), this is maximized at $x = \frac{1}{1 + e^\varepsilon}$, yielding a maximum of
    %\begin{align}
        %\max_{x \in [0,1]} \Big( \frac{e^{\varepsilon} x}{e^{\varepsilon} x + e^{-\varepsilon}(1-x)} - x \Big) = \frac{e^{\varepsilon}-1}{e^{\varepsilon}+1} = \tanh\Big(\frac{\varepsilon}{2}\Big) \leq \frac{\varepsilon}{2} \,.
    %\end{align}
    Tracing this bound back to the conditional TV distance concludes the proof.\footnote{For a full explanation of the maximization step in \eqref{eq::proof-last-step}, see \Cref{sec::proof-last-step-full}.}
\end{proof}

\section{The PMATIC Algorithm}
\label{sec::algorithm}

The Probability Matching Interval Coding (PMATIC) algorithm addresses prediction mismatch by ensuring that the encoder and decoder use a common probability distribution for each token. The first step of PMATIC is to convert the input token string into a bitstring using a dictionary that associates each token with a length $\ell := \lceil\log_2(|\alphabet|)\rceil$ bitstring; we call this bitstring the token's \emph{longform}\footnote{This term is also used in \cite{cordelia2026} to describe a similar construction.} and each bit in it is a \emph{token bit}.
PMATIC encodes these tokens bits with arithmetic coding using next-bit conditional probabilities derived from the token's encoder prediction vector. At each step, the next-bit prediction is a scalar in $[0,1]$ giving the probability that the token bit equals $1$. The key idea is to divide the interval $[0,1]$ into a set of \emph{bins} (disjoint equal-length intervals which cover $[0,1]$). Then, instead of using their exact predictions, the encoder and decoder use either the center of the bin their predictions fall into or the nearest boundary between two bins; which one to use is decided by the encoder and communicated to the decoder by use of auxiliary `helper' bits, which are also sent via arithmetic coding. This procedure can be viewed as quantizing the probability of each token bit.\footnote{Note that when we say that a bit is encoded or decoded using a probability value $p$, we mean that the corresponding arithmetic coding step is performed using the probability distribution $(1-p, p)$ to set the intervals.}

%: instead of using its exact prediction, the encoder replaces it with the center of the bin containing its prediction, and the decoder does the same. This step can be seen as a quantizing the probability of each token bit. As long as the encoder and decoder predictions are sufficiently close, the encoder and decoder token bit probabilities generally fall into the same bin and therefore the encoder and decoder can agree on a probability value to use, ensuring correctness. An additional mechanism handles the rare cases in which it is uncertain if the decoder's prediction will fall into the same bin as the encoder's. 

For any token $x_i$ in the message, we denote its longform by $\bb_i := b_i(1) \dots b_i(\ell)$ and define the following parameters and notation:
\begin{itemize}
    \item $\delta > 0$, which represents the amount of prediction mismatch (per bit) which the algorithm can tolerate, as measured by conditional TV distance.
    \item $r > 0$ is the radius of the quantization bins (so the width of a bin is $2r$); $r$ will be chosen to maximize performance given $\delta$. We will assume that $r = 1/(2m)$ for some integer $m$; in practice this entails rounding $r$ up or down slightly to the nearest such value. We also require the constraint that $r > 2\delta$. %, which will still be at the correct scale relative to $\delta$. We also always set $r > 2\delta$.
    \item Let $h(p) := p \log \big( \frac{1}{p} \big) + (1-p)\log \big( \frac{1}{1-p} \big)$ be the binary entropy function (the entropy of a Bernoulli random variable with probability $p$), and $H(\bp)$ be the more general entropy function for a probability vector $\bp$ over a finite set.
    \item Let $d_{\kl}(p \| q) := p \log \big(\frac{p}{q}\big) + (1-p) \log \big(\frac{1-p}{1-q}\big)$ be the binary Kullback Liebler divergence (the divergence between Bernoulli random variables with probabilities $p,q$).
    \item Let $S_{\bb_i^{j-1}} := \{\ba \in \{0,1\}^\ell : \ba^{j-1} = \bb_i^{j-1} \}$ be the longforms whose first $j-1$ bits match $\bb_i$.
\end{itemize}

\subsection{The PMATIC Encoder}

Consider encoding the $j$th bit, $b_i(j)$, of token $x_i$. Let the encoder and decoder prediction vectors for token $i$ be, respectively, $\bp(i) := \softmax(\LLM^{\enc}(\bx^{i-1}))$ and $\bq(i) := \softmax(\LLM^{\dec}(\bx^{i-1}))$. The predictions for the $j$th bit $b_i(j)$ for the encoder and decoder, conditional on the prior bits in $\bb_i$, are
\begin{align}
    p_i(j) := \bbP_{\bp(i)}\big[b_i(j) = 1 \, | \, S_{\bb_i^{j-1}}\big] \quad \text{and} \quad q_i(j) := \bbP_{\bq(i)}\big[b_i(j) = 1 \, | \, S_{\bb_i^{j-1}}\big]
\end{align}
These can be computed directly using $\bp(i)$ (or $\bq(i)$) by setting all values outside of $S_{\bb_i^{j-1}}$ to $0$ and renormalizing to get the conditional probability distribution.

The interval $[0,1]$ is then split into radius-$r$ intervals, which we call \emph{bins},  $I_1, I_2, \dots I_m$, where $m = 1/(2r)$ (which, as assumed above, is an integer) and $I_k = [2r(k-1), 2rk]$. The center of bin $I_k$ is therefore $c_k := 2r(k-1) + r$, and we denote the \emph{$\delta$-interior} of $I_k$ (the set of points in $I_k$ at least $\delta$ away from any point outside $I_k$) by
\begin{align}
    I_k^\delta = \begin{cases} [0, 2r-\delta] &\text{if } k = 1 \\ [2r(m-1)+\delta, 1] &\text{if } k = m \\ [2r(k-1)+\delta, 2rk-\delta] &\text{if } k \neq 1, m \end{cases}
\end{align}
$I_1^\delta, I_m^\delta$ get special definition as they each have an edge next to the edge of $[0,1]$. 

Note that if $p_i(j) \in I_k^\delta$ and $|p_i(j) - q_i(j)| \leq \delta$, then $q_i(j) \in I_k$, and that if $p_i(j) \not \in I_k^\delta$ for all $k$, then there is instead a unique integer $k \neq 1,m$ for which $|2rk - p_i(j)| < \delta$. 

In addition to the token bit $b_i(j)$, PMATIC encodes (prior to the token bit) a \emph{helper bit}
\begin{align} \label{eq::helper-bit-definition}
    b'_i(j) = \begin{cases} 0 &\text{if } b_i(j) \in I_k^\delta \text{ for some } k \\ 1 &\text{otherwise} \end{cases}
\end{align}
This is encoded with arithmetic coding using probabilities $p' := \delta/r$ for the helper bit and
\begin{align} \label{eq::quantized-probability-definition}
    \hat{p}_i(j) = \begin{cases} c_k = 2r(k-1) + r &\text{if } p_i(j) \in I_k^\delta \\ 2rk \text{ for the integer } k \text{ s.t. } |2rk - p_i(j)| < \delta &\text{otherwise} \end{cases}
\end{align}
for the token bit. The probability $\hat{p}_i(j)$ is intended as the common probability of token bit $j$ that both the encoder and decoder agree to use. See \Cref{fig::pmatic_bins} for an example.

The intuition for the helper bits is that when $p_i(j) \in I_k^\delta$, the encoder knows that the decoder's probability $q_i(j) \in I_k$ (the same bin), so the encoder quantizes to the bin center and tells the decoder to do the same by sending the helper bit $b'_i(j) = 0$. If $p_i(j)$ is not in the $\delta$-interior of its bin, the encoder no longer knows that the decoder probability lies in the same bin. However, in this case both probabilities must be near the same boundary point between two bins, so the encoder quantizes to the nearest boundary point and tells the decoder to do the same by sending the helper bit $b'_i(j) = 1$. In either case, they agree on the probability to use for encoding and decoding. 
The probability of being in the $\delta$-interior of a bin is $\approx \delta/r$, which is very small if $r \gg \delta$. This gives the helper bits low entropy and hence makes them also highly compressible via arithmetic coding. 

\begin{figure}
\centering
\includegraphics[scale = .45]{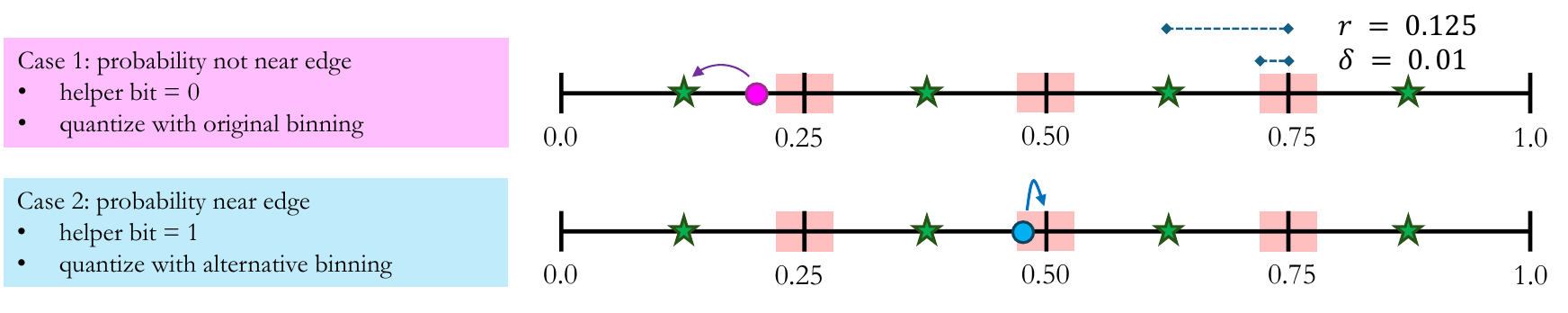}
\caption{\label{fig::pmatic_bins} Examples of PMATIC helper-bit and quantization logic for two cases, one where the helper bit is $0$ and one where the helper bit is $1$.}
\end{figure}

%Since we will make bins significantly larger than the mismatch tolerance $\delta$, helper bits are typically $0$, indicating that the default procedure was followed to determine the common probability $\hat{p}_i(j)$. The helper bit is $1$ when the default does not happen. The key idea here is that when the token bit probability $p_i(j)$ of the encoder falls too close to the bin edge, this creates an ambiguous situation where it is unknown if the decoder's token bit probability $q_i(j)$ will be in the same bin. However, in such a case, it is known that both $p_i(j)$ and $q_i(j)$ are close to the bin edge. Thus, we can let $\hat{p}_i(j)$ be the the value where the bin edge lies. 

To summarize, given a token $x_i$ and context $\bx^{i-1}$, the PMATIC encoder does the following:
\begin{enumerate}
    \item Computes $\bp(i) = \softmax(\LLM^{\enc}(\bx^{i-1}))$, gets the longform $\bb_i$ corresponding to $x_i$, and computes the conditional next-bit probabilities $p_i(1), \dots, p_i(\ell) \in [0,1]$.
    \item Computes for each $b_i(j)$ the helper bit $b'_i(j)$ and quantized probability $\hat{p}_i(j)$ (\eqref{eq::helper-bit-definition}, \eqref{eq::quantized-probability-definition}).
    \item Encodes the bitstring $b'_i(1) b_i(1) \dots b'_i(\ell) b_i(\ell)$ using arithmetic coding, where the encoding probability for helper bit $b'_i(j)$ is $p' = \delta/r$ and for token bit $b_i(j)$ is $\hat{p}_i(j)$.
\end{enumerate}
An example of the encoding process for one token is given in \Cref{sec::full-encoding-example}. 
%The output for the single-token encoding subroutine is denoted $\bz_i = z'_i(1), z_i(1), \dots, z'_i(\ell), z_i(\ell)$ where $z'_i(j) = (b'_i(j); \delta/r)$ and $z_i(j) = (b_i(j); \hat{p}_i(j))$; each $z'_i(j)$ or $z_i(j)$ consists of a bit to be encoded and a probability value to encode it with in the arithmetic code.

\iffalse
\begin{algorithm}[H]
\caption{PMATIC Encoder (parameters $\delta, r$, prediction model $\LLM^{\enc}$ )}
\KwIn{Message $\bx = x_1 \dots x_n$}
\KwOut{Encoding $\by = y_1 \dots y_{n'}$}

\SetKwFunction{FMain}{MainProcedure}
\SetKwFunction{SingleToken}{SingleTokenEncoding}
%\SetKwFunction{FSubB}{SubroutineB}

\SetKwProg{Fn}{Function}{:}{}

\Fn{\FMain{$\bx$}}{
    $(a_{\text{low}}, a_{\text{high}}) = (0,1)$\;
    \For{$i \gets 1$ \KwTo $n$}{
        $\bz_i \gets \SingleToken(x_i, \bx^{i-1})$\;
        $(a_{\text{low}}, a_{\text{high}}) =$ \texttt{UpdateACBounds}$(a_{\text{low}}, a_{\text{high}}, \bz_i)$
    }
    \KwRet $\by = $\texttt{ ChooseRepresentative}$(a_{\text{low}}, a_{\text{high}})$\;
}

\vspace{0.3em} % small gap between functions

\Fn{\SingleToken($x_i, \bx^{i-1}$)}{
    $\bp(i) \gets \softmax(\LLM^{\enc}(\bx^{i-1}))$\;
    $\bb_i \gets $ \texttt{TokenBitstring}$(x_i)$\;
    $(p_i(1), \dots, p_i(\ell)) \gets $ \texttt{ConditionalNextBitPredictions}$(\bb_i, \bp(i))$\;
    \For{$j \gets 1$ \KwTo $\ell$}{
        
    }
    \KwRet $\bz_i = (z'_i(1), z_i(1), \dots, z'_i(\ell), z_i(\ell))$\; 
    \tcp{Example: increment element}
}

\end{algorithm}
\fi

\subsection{The PMATIC Decoder}

The PMATIC decoder takes the encoded message $\by$ and decodes it sequentially in pairs of bits. Each pair consists of a helper bit and a token bit; the helper bit is decoded first using $p' = \delta/r$ as the probability (since helper bits are always encoded using this probability), and determines the quantized probability to use. Analogous to the encoder next-bit prediction, let the decoder next-bit prediction for bit $j$ of token $i$ be denoted $q_i(j)$. Then the decoder decodes the token bit using the quantized probability:
\begin{align}
    \hat{q}_i(j) = \begin{cases} c_k \text{ for } k \text{ s.t. } q_i(j) \in I_k &\text{if } b'_i(j) = 0 \\ 2rk \text{ for } k \in \{1, \dots, m-1\} \text{ s.t. } |2rk - q_i(j)| \text{ is minimized} &\text{if } b'_i(j) = 1 \end{cases}
\end{align}
After decoding all the bits, the helper bits are discarded and the token bits are converted back into the token using the longform dictionary. The token is then added to the context of the predictive model and predictions are generated for the next token.
PMATIC is successful when $\hat{q}_i(j)$ is the always the same as $\hat{p}_i(j)$. This is discussed in greater detail in the next section.

\section{Analysis}

We wish to: (i) show PMATIC ensures correctness if the conditional TV distance between the encoder and decoder token predictions is at most $\delta$; (ii) show (expected) theoretical performance bounds. Since we compress in bits, logarithms are base-$2$ unless noted otherwise.

\subsection{Correctness}

\begin{theorem} \label{thm::pmatic-correctness}
    If $d_{\condtv}(\bp(i), \bq(i)) \leq \delta$, then $\hat{q}_i(j) = \hat{p}_i(j)$ for all $j$ (i.e. the encoder and decoder will agree on the quantized probabilities for all bits corresponding to token $x_i$).
\end{theorem}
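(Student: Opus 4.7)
The plan is to prove the theorem by induction on the bit position $j \in \{1, \ldots, \ell\}$, establishing at each step that both the helper bit $b'_i(j)$ and the token bit $b_i(j)$ are decoded correctly; this in turn keeps the arithmetic-coding state synchronized on the two sides at step $j+1$. The helper bit is always decoded correctly (given synchronization up to that point) because both sides use the same fixed probability $p' = \delta/r$, which has no dependence on the model. For the token bit, it suffices to show $\hat{q}_i(j) = \hat{p}_i(j)$, since arithmetic coding with matching probabilities is lossless.

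The key ingredient is the bound $|p_i(j) - q_i(j)| \leq \delta$, which I would derive directly from the hypothesis $d_{\condtv}(\bp(i), \bq(i)) \leq \delta$. By construction, $p_i(j)$ and $q_i(j)$ are the $\bp(i)$- and $\bq(i)$-probabilities, conditional on the set $S_{\bb_i^{j-1}}$, of the event that the $j$-th bit equals $1$; because the previous token bits have been decoded correctly by the inductive hypothesis, the conditioning set is the same on both sides. Since the absolute difference of the two conditional probabilities of any event is at most the total variation distance between the conditional distributions, we get $|p_i(j) - q_i(j)| \leq d_{\tv}(\bp(\cdot \mid S_{\bb_i^{j-1}}), \bq(\cdot \mid S_{\bb_i^{j-1}})) \leq d_{\condtv}(\bp(i), \bq(i)) \leq \delta$.

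With $|p_i(j) - q_i(j)| \leq \delta$ in hand, the remainder is a case analysis on $b'_i(j)$, which the decoder has already recovered from the just-prior step. If $b'_i(j) = 0$, then by definition $p_i(j) \in I_k^\delta$ for some $k$, and the $\delta$-interior property stated in the paper immediately gives $q_i(j) \in I_k$; both sides then quantize to the same bin center $c_k$. If $b'_i(j) = 1$, then there is a unique interior boundary $2rk$ with $|2rk - p_i(j)| < \delta$, so $|2rk - q_i(j)| \leq 2\delta$, while every other boundary $2rk'$ satisfies $|2rk' - q_i(j)| \geq 2r - 2\delta$ by the triangle inequality. The standing assumption $r > 2\delta$ makes $2r - 2\delta > 2\delta$, which is precisely what ensures that the decoder's nearest-boundary rule picks the same $k$ as the encoder. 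The main obstacle is less a technical hurdle than careful bookkeeping: one has to interleave the induction across helper and token bits and be explicit that the conditioning set $S_{\bb_i^{j-1}}$ agrees on both sides only because the prior token bits were correctly decoded, which is why the induction cannot be skipped even within a single token.
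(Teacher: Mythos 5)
Your proof is correct and takes essentially the same route as the paper's: reduce to a single bit via an inductive (equivalently, first-failure) argument, extract $|p_i(j)-q_i(j)|\leq\delta$ from the conditional-TV hypothesis using the fact that the $j$-th bit's conditional probability is itself a conditional probability of $\bp(i)$, $\bq(i)$ given $S_{\bb_i^{j-1}}$, and then run the same two-case analysis on the helper bit with the triangle inequality and the standing assumption $r>2\delta$. The only cosmetic difference is that you phrase the bookkeeping as explicit induction on $j$ while the paper invokes ``without loss of generality, consider the first incorrectly decoded bit,'' which is logically the same.
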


\begin{proof}
    Consider bit $j$ of token $i$; without loss of generality we can assume that all previous bits in $i$ and all previous tokens were decoded correctly (since, if any bits are incorrectly decoded, there must be a first one). Since $d_{\condtv}(\bp(i), \bq(i)) \leq \delta$, we know that $|p_i(j) - q_i(j)| \leq \delta$ (since $p_i(j), q_i(j)$ are derived by conditioning $\bp(i), \bq(i)$ on the set $S_{\bb_i^{j-1}}$). %Furthermore, since $i,j$ are the first token and bit where a mismatch occurred, up to $b_i(j)$ all bits had been decoded correctly.

    First, the decoder will decode the helper bit $b'_i(j)$ using the probability $\delta/r$. Since $\delta/r$ is fixed and used for all helper bits, the encoder and decoder probabilities match and the helper bit is decoded correctly.
    Now we consider two cases: $b'_i(j) = 0$, and $b'_i(j) = 1$.

    If $b'_i(j) = 0$, this means that computed encoder next-bit predictor $p_i(j)$ falls in the $\delta$-interior $I_k^\delta$ of some bin; since $|p_i(j) - q_i(j)| \leq \delta$, this means $q_i(j) \in I_k$ (not necessarily the $\delta$-interior, just the bin itself). Thus, since both the encoder and decoder quantize to the center $c_k$ of the bin, we have $\hat{p}_i(j) = \hat{q}_i(j) = c_k$ and the token bit is encoded correctly.

    If $b'_i(j) = 1$, then there is some $k \in \{1, \dots, m-1\}$ such that $|p_i(j) - 2rk| \leq \delta$ (note that $2rk$ here is the boundary between two bins). Then, since we set $r > 2\delta$ and bins have width $2r$:
    \begin{align}
        &|p_i(j) - 2rk| \leq \delta \implies |q_i(j) - 2rk| \leq 2\delta
        \\ \implies &|q_i(j) - 2rk'| \geq 2r - 2\delta > 2\delta \text{ for any integer } k' \neq k
        \\ \implies &\hat{q}_i(j) = 2rk = \hat{p}_i(j).
    \end{align}

    Thus, in either case, the encoder and decoder agree on the next-bit probability for $b_i(j)$ and the decoder will decode the bit and update the arithmetic code interval correctly.
\end{proof}

Note that, by \Cref{thm::pmatic-correctness} and \Cref{prop::logit-linfty-bounds-condtv}, if the LLM has logits that differ by at most $\varepsilon$ between the encoder and decoder, then PMATIC using $\delta = \varepsilon/2$ guarantees correctness.

\subsection{Compression Loss}

For the compression performance analysis, we make the following simplifying assumptions:
\begin{itemize}
    \item The encoder's next-token probabilities are the true probabilities, so the expected length increase per bit $b_i(j)$ is $d_{\kl}(p_i(j) \| \hat{p}_i(j))$ \cite[Thm 5.4.3]{coverthomas}.%, and that the no-mismatch optimal expected message length per token $x_i$ is $H(\bp_i)$ where $H(\cdot)$ is the entropy.
    \item Within each individual bin, the encoder next-bit probability is roughly uniformly distributed, so probability of being within $\delta$ of a bin boundary is $\approx \delta/r$ (this is an approximation since we ignore the fact that the first and last bins have only one relevant boundary each). This is a `worst-case non-adversarial' assumption: next-bit probabilities do not disproportionately cluster near bin boundaries, but we otherwise know nothing about them. %(so e.g. it's not disproportionately likely to fall next to the bin boundary and the probability of being within $\delta$ of a bin boundary is $\approx \delta/r$).
\end{itemize}
We consider the \emph{compression loss} of PMATIC over traditional (non-mismatch-tolerant) arithmetic coding, i.e. the extra message length incurred by PMATIC in order to tolerate a conditional TV distance bound of $\delta$ with a bin width of $r$. This loss comes from two sources:
\begin{enumerate}
    \item \emph{Helper bit encoding:} Since helper bits are assumed to be Bernoulli with parameter $\delta/r$, the expected extra encoding length per helper bit is the binary entropy $h(\delta/r) = \frac{\delta}{r} \log\big( \frac{r}{\delta} \big) + \big(\frac{r - \delta}{r}\big) \log\big( \frac{r}{r-\delta} \big)$.
    If $r \gg \delta$, the first term dominates and the entropy is $\approx \frac{\delta}{r} \log\big( \frac{r}{\delta} \big)$.
    
    \item \emph{Quantization loss:} The quantized probability $\hat{p}_i(j)$ is different than the true probability $p_i(j)$, incurring a quantization loss of $d_{\kl}(p_i(j) \| \hat{p}_i(j))$. 
    
    %\iffalse
    %Since $r \leq \hat{p}_i(j) \leq 1-r$ (all bin centers and boundaries are in $[r, 1-r]$), $|p_i(j) - \hat{p}_i(j)| \leq r $, and $r < 1/2$, the quantization loss satisfies 
    %\begin{align}
    %    d_{\kl}(p_i(j) \| \hat{p}_i(j)) \leq \log \Big(\frac{1}{1-r}\Big) \leq 2r
    %\end{align}
    %(proof given in the appendix).\footnote{The constant also decreases as $r \to 0$, %approaching a bound of $\log(e)r$.}
    %\fi
    
    Since $r \leq \hat{p}_i(j) \leq 1-r$ and $|p_i(j) - \hat{p}_i(j)| \leq r$, this satisfies $d_{\kl}(p_i(j) \| \hat{p}_i(j)) \leq 2\log(e)r$, since KL divergence is bounded above by $\chi^2$ divergence \cite[Ch. 7]{Polyanskiy_Wu_2025}:
    %\begin{align}
    %    d_{\kl}(p_i(j) \| \hat{p}_i(j)) \leq \log(e) \bigg( \frac{(p_i(j)-\hat{p}_i(j))^2}{\hat{p}_i(j)} + \frac{(p_i(j)-\hat{p}_i(j))^2}{1 - \hat{p}_i(j)} \bigg) \leq 2 \log(e) r
    %\end{align}
        \begin{align}
        d_{\kl}(p_i(j) \| \hat{p}_i(j)) \leq \log(e)  \frac{(p_i(j)-\hat{p}_i(j))^2}{\hat{p}_i(j)(1 - \hat{p}_i(j))} \leq 2 \log(e) r
    \end{align}
    since the numerator is $\leq r^2$ and the denominator is $\geq (1/2)r$.%\footnote{In fact, this is somewhat of an overestimate of the upper bound since the two terms cannot simultaneously take value $r$, but it is on the correct order so it is sufficient for our analysis.}

\end{enumerate}
Note that these losses respond in opposite directions when the bin radius $r$ is increased: larger bins mean lower helper bit entropy but a bigger difference between the true probability and the quantized probability. This means that setting $r$ to balance the loss terms gives an approximate minimizer of the objective function: any other $r' \neq r$ will make one of the loss terms larger, so balancing the loss terms incurs a total loss of at most $2$ times the optimal. This is achieved (approximately) with
\begin{align} \label{eq::approximate-optimal-bin-radius}
    2\log(e)r = \frac{\delta}{r} \log\Big( \frac{r}{\delta} \Big) \implies r \approx \frac{\sqrt{\delta \log \big( \frac{1}{\delta} \big)}}{\sqrt{2 \log e}}
\end{align}
and yields a total loss on the order of $O\left( \sqrt{\delta \log \big( \frac{1}{\delta} \big)} \right)$. %(note that one of the loss terms is itself $O(r)$, so the total loss should be proportional to $r$ when balancing the loss terms).

\section{Experiments} \label{sec::experiments}

%{\color{blue}\textbf{(NOTE: Section to be revised to include discussion of new experiments; see response to reviewers for details.)}}

We test PMATIC in model-driven text compression algorithms driven by: LLaMA 3.1 8B (4-bit quantized) [\cite{grattafiori2024llama3herdmodels}], Mistral 7B v0.1 (3-bit quantized) and Qwen 2.5 Instruct 7B (3-bit quantized), whose alphabets have respective sizes of $128,256$ tokens, $32,000$ tokens, and $151,643$ tokens. 
We test PMATIC with three robustness settings $\delta$, with bin width $r$ chosen according to \eqref{eq::approximate-optimal-bin-radius}, rounded to a nearby reciprocal of an integer (in order to get fully uniform bin sizes): (i) $\delta = 0.00001$ and $r = 0.005$; (ii) $\delta = 0.001$ and $r = 0.05$; and (iii) $\delta = 0.1$ and $r = 0.125$.

We compare PMATIC against (i) non-robust model-driven text compression (with the same models) using standard arithmetic coding, and (ii) a set of traditional or modern text compression algorithms, including gzip (widely used in practice) and CMIX (known for achieving state-of-the-art compression ratios). Comparing PMATIC against standard arithmetic coding within the same model-driven compression pipelines shows the `robustness overhead' cost from PMATIC (how much additional information is sent to ensure robustness of the chosen level)\footnote{While there are existing benchmarks for model-driven compression [\cite{valmeekam2023llmzip}, \cite{mittu2024finezip}], the effectiveness of model-driven compression depends heavily on the model and setting used. Thus, to isolate the robustness-efficiency tradeoff produced by PMATIC, we compare PMATIC with standard arithmetic coding within otherwise-identical pipelines.}, while the comparisons to traditional text compression algorithms validates the effectiveness of the model-driven compression pipeline even with the additional overhead from ensuring robustness.

We run our experiment on several datasets: (i) the first (nearly) $10$ MB of the enwik8 benchmark \cite{enwik8}, a collection of Wikipedia articles from 2006; (ii) $1000$ randomly selected articles from Wikipedia; (iii) \emph{Hamlet} by Shakespeare and (iv) \emph{Emma} by Austen, in English; (v) \emph{Candide} by Voltaire in French; and (iv) \emph{Dream of the Red Chamber} (红楼梦) by Cao Xuewin, in Chinese.   
Additional implementation details are in \Cref{sec::additional_implementation}.

\subsection{Results: Compression Ratio}

\begin{table}[h!]
\centering
\small
\begin{tabular}{ l l l c c c c c c}
\toprule
& & Enwik8 & \makecell{Wikipedia} & \makecell{Hamlet} & \makecell{Emma} & \makecell{Candide  \\ \emph{French}} & \makecell{ 红楼梦 \\ \emph{Chinese} } \\

\midrule
\multicolumn{8}{l}{\textbf{LLM-based compression (with and without PMATIC)  }} \\
\midrule

\multirow{4}{*}{\makecell{Meta \\ LLaMA 3.1}} 
& no PMATIC              & 0.0780 & 0.0700 & 0.0878 & 0.0606 & 0.1024 & -- \\
& $\delta = 10^{-5}, r = 0.005$ & 0.0847 & 0.0878 & 0.0952 & 0.0660 & 0.1102 & -- \\
& $\delta = 10^{-3}, r = 0.05$  & 0.1353 & 0.1330 & 0.1514 & 0.1099 & 0.1683 & -- \\
& $\delta = 10^{-2}, r = 0.125$ & 0.2492 & 0.2085 & 0.2772 & 0.2113 & 0.2971 & -- \\
\midrule

\multirow{4}{*}{Mistral 7B}
& no PMATIC              & 0.0867 & 0.0737 & 0.1501 & 0.1066 & --     & -- \\
& $\delta = 10^{-5}, r = 0.005$ & 0.0940 & 0.0794 & 0.1587 & 0.1134 & --     & -- \\
& $\delta = 10^{-3}, r = 0.05$  & 0.1481 & 0.1198 & 0.2167 & 0.1595 & --     & -- \\
& $\delta = 10^{-2}, r = 0.125$ & 0.2699 & 0.2106 & 0.3447 & 0.2610 & --     & -- \\
\midrule

\multirow{4}{*}{\makecell{Qwen2.5 7B \\ Instruct}}
& no PMATIC              & 0.0881 & 0.0824 & 0.1102 & 0.1183 & 0.1150 & 0.1268 \\
& $\delta = 10^{-5}, r = 0.005$ & 0.0951 & 0.0880 & 0.1177 & 0.1253 & 0.1231 & 0.1345 \\
& $\delta = 10^{-3}, r = 0.05$  & 0.1501 & 0.1315 & 0.1755 & 0.1738 & 0.1831 & 0.1879 \\
& $\delta = 10^{-2}, r = 0.125$ & 0.2751 & 0.2297 & 0.3067 & 0.2825 & 0.3171 & 0.3073 \\
\midrule

\multicolumn{8}{l}{\textbf{Standard baselines}} \\
\midrule

cmix        &                 & 0.3558 & 0.3644 & 0.3865 & 0.3797 & 0.3709 & 0.3824 \\
brotli      & level 11        & 0.3524 & 0.3546 & 0.4361 & 0.3918 & 0.4442 & 0.4733 \\
bzip2       & level 9         & 0.4537 & 0.4605 & 0.4636 & 0.4539 & 0.4494 & 0.4727 \\
xz          & level 9         & 0.4647 & 0.4904 & 0.5111 & 0.4973 & 0.4866 & 0.5192 \\
gzip        & level 9         & 0.4601 & 0.4759 & 0.5007 & 0.4852 & 0.4768 & 0.5305 \\
zstd        & level 22        & 0.4676 & 0.4773 & 0.5034 & 0.4882 & 0.4851 & 0.5544 \\
\bottomrule
\end{tabular}
\caption{Compression ratio ($\frac{\text{compressed file size}}{\text{uncompressed file size}}$, lower is better) across different models, PMATIC parameters, and datasets. PMATIC overhead (cost of gaining mismatch robustness through PMATIC) for each parameter setting can be computed by subtracting the corresponding `no PMATIC' compression ratio from the PMATIC compression ratio corresponding to that robustness setting.}
\label{tab:compression-results}
\end{table}

Compression efficiency results are given in \Cref{tab:compression-results} (where compression ratios are given as $\frac{\text{compressed file size}}{\text{uncompressed file size}}$, following the convention of \cite{deletang_2024}), and show that PMATIC provides robustness to numerical deviations while keeping favorable compression ratios over traditional algorithms, across the models and settings we tested. 
Even in the most robust setting, compression with PMATIC still achieves significantly better compression rates than traditional compression. %Gzip compresses to about $40 \%$ of the original file size, whereas PMATIC encoding with Llama 3.1 8B (quantized) predictions compresses the input to about $13 \%$ of its raw size with a logit error tolerance of $0.002$ per logit.

We also studied whether the `worst-case non-adversarial' assumption that the next-bit probabilities have a $\approx \delta/r$ chance of falling within $\delta$ of a bin boundary (and thus setting the corresponding helper bit to $1$) was accurate. \Cref{tab:helper-bits} compares the expected fraction of helper bits set to $1$ (under the uniformity assumption) to the actual fraction. The last column gives the fraction of the size of the compressed files dedicated to helper bits under each robustness setting, showing how much improvement might be obtained by improving the compression of the helper bits alone. 
The results show that helper bits are considerably less likely to be $1$ in practice than the uniformity assumption indicates. This is because the next bit is often essentially certain (probability close to $0$ or $1$, which do not fall near bin boundaries), particularly when earlier bits have narrowed down possible next token. Because PMATIC relies on low helper-bit entropy to efficiently communicate them, a lower fraction of helper bits being $1$ is better; in our implementation, the helper bits were encoded with arithmetic coding using the uniformity-assumption probability of $\delta/r$, but these results indicate that using more accurate helper bit probabilities can yield significant improvement.

\begin{table}
\centering
\small
\begin{tabular}{l c c c}
\toprule
\textbf{Parameter settings} & \textbf{Expected helper 1 fraction} & \textbf{Measured helper 1 fraction} & \textbf{Helper bit cost} \\
\midrule

no PMATIC                           &0          & 0         & 0         \\
$\delta = 10^{-5},\; r = 0.005$     &0.002      & 0.00051   & 0.04594   \\
$\delta = 10^{-3},\; r = 0.05$      &0.02       & 0.00368   & 0.18947   \\
$\delta = 10^{-2},\; r = 0.125$     &0.08       & 0.01145   & 0.34073   \\

\bottomrule
\end{tabular}
\caption{Helper bit behavior averaged across different PMATIC robustness parameter settings. \emph{Expected helper $1$ fraction} is $\delta/r$, approximately the fraction of helper bits which would be set to $1$ if next-bit probabilities are uniformly distributed; \emph{Measured helper $1$ fraction} is the actual fraction of helper bits that were $1$, across all models and datasets; and \emph{Helper bit cost} is the fraction of the size of the compressed files dedicated to storing helper bits. The results indicate that in practice, next-bit probabilities are much \emph{less} likely to fall within $\delta$ of a bin boundary than the uniformity assumption implies, and that accounting for this could make PMATIC considerably more efficient.}
\label{tab:helper-bits}
\end{table}

\subsection{Results: Robustness to Synthetic and Real Non-Determinism}

To test the theoretical correctness of PMATIC (\Cref{thm::pmatic-correctness}), we added IID synthetic noise from $[-2\delta, 2\delta]$ (conforming to the theoretical guarantee in \Cref{prop::logit-linfty-bounds-condtv}) to each predicted logit before decoding. As expected, all files were decoded successfully.

We also ran a smaller number of  tests on PMATIC under real non-determinism resulting from encoding and decoding the output on two different machines using Meta Llama 3.1. %The setting captures the scenario of an user encoding data on one laptop, but needed the decoded data on another latptop. 
In these tests, the encoding and decoding were performed on two different Apple laptops (MacBook Pro) using, respectively, the Apple M2 Pro chipset with a 16 core GPU and the Apple M4 Max chipset with a 32 core GPU. 
This test was conducted on $100$ Wikipedia articles and the first $250$ KB of \emph{Emma}, split into $50$ files each with size $5$ KB (both are a subset of the files used in the compression ratio tests). No files were correctly decoded when using arithmetic coding (no PMATIC) or PMATIC with $\delta = 0.001$. However, when tested with parameter setting $\delta = 0.01$, all files were correctly decoded. This matched preliminary explorations indicating that $\delta = 0.01$ is a reasonable estimate for the upper bound on the discrepancy in Llama 3.1 outputs between the two laptops. See \Cref{sec::samples_real} for more.

\section{Future Work}

While this work has focused on the initial implementation and validation of PMATIC using text models and datasets, we view the application of PMATIC to compression in other domains, such as images, to be a natural extension.

Furthermore, PMATIC is designed for the bounded prediction mismatch setting; however, logit mismatches arising from non-determinism may obey stochastic bounds rather than strict upper bounds, as suggested by our own observations (see \Cref{sec::samples_real}). While this could be largely dealt with by increasing the robustness far enough that the likelihood of mismatches exceeding it are vanishingly unlikely, this comes with significant inefficiency and cost, and is likely to not take full advantage of the structure of the stochastic mismatch. Therefore, it would be interesting and useful to extend PMATIC for relevant stochastically-bounded mismatch models, which could be identified through a more detailed investigation of the statistical properties of model non-determinism. %For practical usage, the tradeoff between prediction model size, compression efficiency, and computational performance should be characterized so that a model exhibiting a good balance of these characteristics can be chosen.

Our results also indicate that PMATIC could be significantly improved with better helper-bit probability estimation (see \Cref{tab:helper-bits}), rather than relying on the uniformity assumption that $\approx \delta/r$ of the helper bits will be $1$, warranting further investigation of how to best do so and how much improvement can be achieved. On the other side, the fundamental limits of model-driven compression under prediction mismatch (that is, how much additional message length is mathematically required to correct a given amount of potential mismatch) remain unknown, and further work on the mathematical and information-theoretic properties of the problem will be needed to address this.

Finally, model non-determinism remains a significant challenge in several contexts outside of model-driven compression, such as ensuring reproducibility of experimental results in machine learning. It may be interesting to explore whether PMATIC might offer helpful tools for these contexts.%\footnote{For instance, ensuring reproducible LLM results by publishing highly compressed helper bit strings along with a PMATIC-like probability matching system.}

\newpage

\section*{Acknowledgements}

We thank Cordelia Hu for her assistance in dataset preparation and configuring the experimental pipeline used in this work. We also thank Ali Jadbabaie for his advice on this project.

\bibliography{references}
\bibliographystyle{ICLR2026/iclr2026_conference}

\newpage

\appendix
\section{Appendix}

\subsection{Full proof of the last step of \Cref{prop::logit-linfty-bounds-condtv}}

\label{sec::proof-last-step-full}

In this section we give the full proof of the last step of the proof of \Cref{prop::logit-linfty-bounds-condtv}, particularly the maximization step in \eqref{eq::proof-last-step}. We recall that the assertion is that:
\begin{align}
    \max_{p^* \in [0,1]} \Big( \frac{e^{\varepsilon} p^*}{e^{\varepsilon} p^* + e^{-\varepsilon}(1-p^*)} - p^* \Big) = \tanh\Big(\frac{\varepsilon}{2}\Big)
\end{align}
We first note that if $p^* = 0$ or $p^* = 1$, then
\begin{align}
    \frac{e^{\varepsilon} p^*}{e^{\varepsilon} p^* + e^{-\varepsilon}(1-p^*)} - p^* = 0
\end{align}
and since $\varepsilon > 0$ (so $e^{\varepsilon} > e^{-\varepsilon}$), when $p^* = 1/2$, then
\begin{align}
    \frac{e^{\varepsilon} p^*}{e^{\varepsilon} p^* + e^{-\varepsilon}(1-p^*)} - p^* = \frac{e^{\varepsilon}}{e^{\varepsilon} + e^{-\varepsilon}} - 1/2 > 0.
\end{align}
Thus, neither $0$ nor $1$ can be the maximizing value and we know that the maximizing value of $p^*$ (if it exists) is in $(0,1)$. Thus, we can represent $p^* = \frac{e^{\alpha}}{e^{\alpha} + e^{-\alpha}}$ for some $\alpha \in \bbR$, and maximize over $\alpha$ instead. Then we get
\begin{align}
    \frac{e^{\varepsilon} p^*}{e^{\varepsilon} p^* + e^{-\varepsilon}(1-p^*)} = \frac{e^{\varepsilon} \frac{e^{\alpha}}{e^{\alpha} + e^{-\alpha}}}{e^{\varepsilon} \frac{e^{\alpha}}{e^{\alpha} + e^{-\alpha}} + e^{-\varepsilon}\frac{e^{-\alpha}}{e^{\alpha} + e^{-\alpha}}} = \frac{e^{\varepsilon + \alpha}}{e^{\varepsilon + \alpha} + e^{-(\varepsilon + \alpha)}}
\end{align}
so we are trying to maximize (over $\alpha \in \bbR$) the expression
\begin{align}
    \frac{e^{\varepsilon} p^*}{e^{\varepsilon} p^* + e^{-\varepsilon}(1-p^*)} - p^* &= \frac{e^{\varepsilon + \alpha}}{e^{\varepsilon + \alpha} + e^{-(\varepsilon + \alpha)}} - \frac{e^{\alpha}}{e^{\alpha} + e^{-\alpha}}
    \\ &= f(\varepsilon + \alpha) - f(\alpha) \quad \text{for} \quad f(z) := \frac{e^z}{e^z + e^{-z}} \,.
\end{align}
%
\iffalse
Note that that $f(0) = 1/2$, and that $f(-z) = 1 - f(z)$. This means $f'(-z) = f'(z)$ for any $z$. Thus, we note that setting $\alpha = -\varepsilon/2$ yields
\begin{align}
    f'(\varepsilon + \alpha) = f'(\varepsilon + (-\varepsilon/2)) = f'(\varepsilon/2) = f'(-\varepsilon/2) = f'(\alpha)
\end{align}
and hence $f'(\varepsilon + \alpha) + f'(\alpha) = 0$. Thus, setting $\alpha = -\varepsilon/2$ makes the derivative of our function equal to $0$.
\fi
%
Incidentally, $f(z)$ is the logistic function with input scaled by $2$. Finally, we note that 
\begin{align}
    f'(z) = \frac{e^z (e^z + e^{-z}) - (e^z - e^{-z})e^z}{(e^z + e^{-z})^2} = \frac{2}{(e^z + e^{-z})^2}
\end{align} 
which is maximized at $z = 0$, satisfies $f'(z) = f'(-z)$ for all $z$, and decreases as $|z|$ increases. Thus,
\begin{align}
    f'(\varepsilon + \alpha)~ \begin{cases}~ > f'(\alpha) &\text{when } \alpha < -\varepsilon/2 \quad \text{since} \quad |\alpha| > |\varepsilon + \alpha| \\~ < f'(\alpha) &\text{when } \alpha > -\varepsilon/2 \quad \text{since} \quad |\alpha| < |\varepsilon + \alpha| \end{cases}
\end{align}
and hence $f(\varepsilon + \alpha) - f(\alpha)$ is a function that is increasing when $\alpha < -\varepsilon/2$ and is decreasing when $\alpha > -\varepsilon/2$, making $\alpha = -\varepsilon/2$ the maximizing value.

Then, plugging in $\alpha = -\varepsilon/2$ yields
\begin{align}
    \max_{p^* \in [0,1]} \Big(\frac{e^{\varepsilon} p^*}{e^{\varepsilon} p^* + e^{-\varepsilon}(1-p^*)} - p^*\Big) &= \frac{e^{\varepsilon + \alpha}}{e^{\varepsilon + \alpha} + e^{-(\varepsilon + \alpha)}} - \frac{e^{\alpha}}{e^{\alpha} + e^{-\alpha}}
    \\ &= \frac{e^{\varepsilon/2}}{e^{\varepsilon/2} + e^{-\varepsilon/2}} - \frac{e^{-\varepsilon/2}}{e^{-\varepsilon/2} + e^{\varepsilon/2}}
    \\ &= \frac{e^{\varepsilon/2} - e^{-\varepsilon/2}}{e^{\varepsilon/2} + e^{-\varepsilon/2}}
    \\ &= \tanh\big(\varepsilon/2)
\end{align}
Finally, it is a well-known fact that $\tanh(z) \leq z$ for $z \geq 0$, which completes the proof.

\subsection{PMATIC Encoding Full Token Example}

\label{sec::full-encoding-example}

Illustrated here is an example of PMATIC encoding when there are $8$ total tokens (each token can be described by a $3$-bit longform). PMATIC has parameter settings of $\delta=0.01$ and $r = 0.125$ (so there are four quantization bins in total).

\begin{figure}[ht]
\centering
\includegraphics[scale = 0.45]{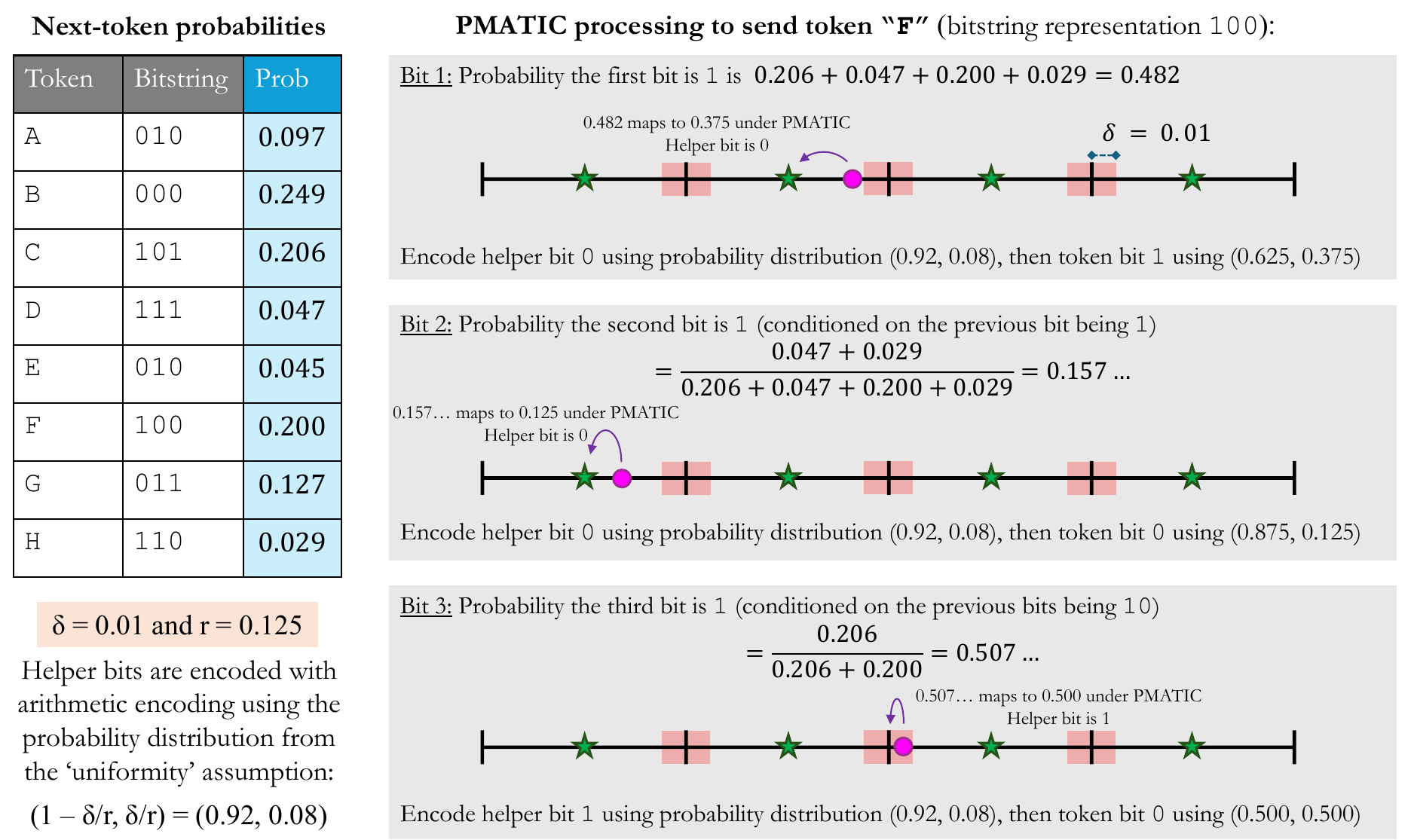}
\caption{\label{fig::pmatic_demo}  Illustration of PMATIC encoding for token `\texttt{F}'. The table in the figure shows the corresponding longform and model-computed probability for each token (the model-computed probability depends on context).} 
\end{figure}

The example shown in \Cref{fig::pmatic_demo} concerns the encoding process for a single token in a simplified setting with a token alphabet of size $8$ (with tokens denoted by characters \texttt{A} through \texttt{H}). Each token's associated longform (of length $3$) and the probability assigned to it by the predictive model are given in the table at the left of the diagram. To the right, the encoding process is depicted graphically: the $[0,1]$ interval is divided into bins, whose centers are denoted by green stars; a radius of $\delta$ around each boundary point between two bins is shown in red.

Then, the encoding process for token \texttt{F} given the model's predictions is shown. Each bit in \texttt{F}'s longform (in this case, \texttt{100}) is encoded sequentially. In each step, the probability of the next bit being \texttt{1}, defined by the model's predictions and conditioned on the values of the previously-encoded bits, is shown as the pink dot on the $[0,1]$ interval: if it falls sufficiently far from the nearest bin boundary (outside the red regions), it is quantized to the corresponding bin center and a corresponding helper bit with value \texttt{0} is generated; if it falls close to the nearest bin boundary (inside a red region) it is quantized to the boundary point and a corresponding helper bit with value \texttt{1} is generated. Then, the helper bit and the token bit are encoded using arithmetic coding, using probability $\delta/r$ for the helper bit and the quantized probability for the token bit.

\iffalse
\subsection{Detailed Results}

{\color{blue} More detailed results given for selected experiments in \Cref{fig::extra_results}.}

\begin{figure}[h]
\centering
\begin{tabular}{| l | r | r | r | r | r |}
\hline
 & & Setting 1 & Setting 2 & No mismatch & gzip \\ 
\hline
 & $r$ & 0.05 & 0.005 & & \\
 & $\delta$ & 0.001 & 0.00001 & & \\ 
 & logit mismatch bound & 0.002 & 0.00002 &  0.0 & \\ \hline
%1000 Articles & Bits per token     & %4.5688 & 3.0073 & & \\
% & Bits per character & 1.1079 & 0.7279 %& & \\
  % & ideal bit length per token & 2.9446 & 2.9446 &  & \\ 
 %& Total file size (bytes)  	& 290,400 	& 191,113 	&  & 1,079,482\\
 %& Compression ratio & 13.31\% & 8.76\% %&  & 49.49\% \\ \hline
 1000 Articles & Bits per token     & 5.34 & 3.52 &  2.81 & 16.41 \\
 & Bits per character & 1.24 & 0.82 &  0.65 & 3.82 \\
 & Total file size (bytes)  	&  398,073 	& 262,643 	& 209,317  & 1,223,143\\
 & Compression ratio & 13.30\% & 8.78\% & 7.00\% & 40.88\% \\ \hline

 %Total original file size (bytes) 	& 2,181,120	&  2,181,120 	& 2,181,120\\
%%%%%%
enwik8  & Bits per token     & 4.13 & 2.59 & 2.38 & 14.11 \\
 & Bits per character & 1.08 & 0.68 & 0.62 & 3.70 \\
 & Total file size (bytes)  	& 1,343,029 	& 840,678	& 773,785 & 4,583,706 \\
 & Compression ratio & 13.53\% & 8.47\% & 7.80\% & 46.19\% \\ \hline

\end{tabular}
\caption{Compression results of PMATIC algorithm with two robustness levels compared to 1) when the same LLM-driven compression is used when there is no mismatch and 2) gzip (modern standard for text compression).
The total raw size of the 1000 random Wikipedia articles is %$2,181,120$
$ 2,992,016$ bytes. We use the first $ \approx 10$ MB of enwik8, with total raw size $9,923,563$ bytes. \label{fig::extra_results}}
\end{figure}
\fi

\subsection{Additional Implementation Details}

\label{sec::additional_implementation}

Since PMATIC requires each token be assigned a unique fixed-length bitstring (its longform), we assign a random $\ell$-bit representation (where $\ell$ depends on the model used\footnote{Specifically, $\ell = \lceil \log_2 |\cA| \rceil$, where $|\cA|$ is the token alphabet size.}) for each token and convert the file to a bitstring. The same longform dictionary is used across all files in each setting.

Our PMATIC implementation is based on the arithmetic coding implementation from llama-zip (\cite{llama-zip}).

For our datasets used, the Wikipedia articles tests were randomly selected and pulled in September 2025. These text files have non-ASCII characters removed before encoding (no limitations were put on the possible outputs of files). These files varied in length, from being very short to being long. Other text files from novels were split into several files of size $5$ KB. 

Compression ratio and synthetic non-determinism experiments were run on the MIT Supercloud [\cite{MITSupercloud}], a high-performance computing system with Xeon CPU nodes and Volta GPUs, while experiments exploring performance on real non-determinism were run using two different different Apple laptops. To speed up the inference steps required to compress the inputs, we use a rolling context window of maximum size 512 which resets every 256 tokens via truncation: each time the context length reaches 512, we drop the oldest 256 tokens.\footnote{We remark that while computational efficiency and latency is a major concern for model-driven compression, this arises primarily from performing LLM inference in order to encode and decode each token, and that the computational burden of the PMATIC algorithm is likely to be very slight in comparison.}

\subsection{A Sample of Real Non-Determinism on Real Data}

\label{sec::samples_real}

\paragraph{A Sample of the Amount of Non-Determinism}

For our tests of PMATIC under real non-determinism, we estimated that PMATIC using $\delta = 0.01$ could reasonably result in correct decoding of the text files. One indication of this is given in \Cref{fig::amount_noise}, which gives a histogram of the logit differences. Here, the first logit is given by running the LLM (Llama 3.1) on the device doing the encoding (Apple M2 Pro chipset with a 16 core GPU). The second logit is computed on the decoding device (Apple M4 Max chipset with a 32 core GPU). The prompt used is text from a book which resulted in $810$ tokens (the context length is $512$ tokens). 

\begin{figure}
    \centering
    \includegraphics[scale = .5]{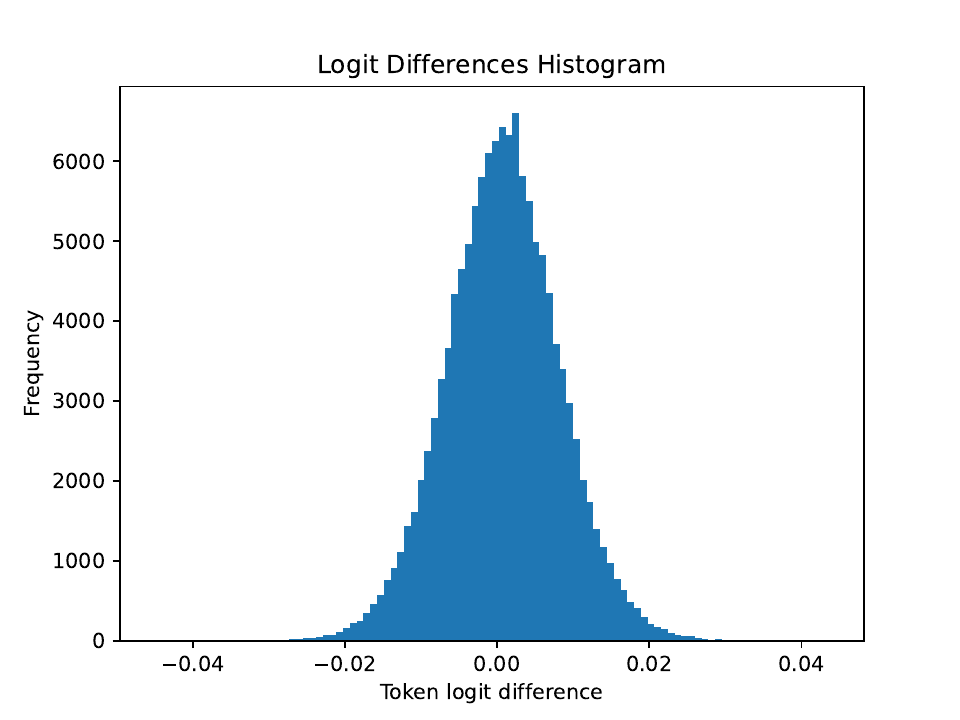}
    \caption{Histogram giving the difference in token logits when running the same LLM on the same prompt on two different devices.}
    \label{fig::amount_noise}
\end{figure}

Setting $\delta = 0.01$ would tolerate logit differences of around $\varepsilon = 0.02$, which, according the results in \Cref{fig::amount_noise}, would suffice for most of the tokens in the tested prompt.\footnote{Note also that a mismatch exceeding the bounds does not always result in a decoding error, just in the loss of PMATIC's \emph{guarantee} of correctness.} However, it is clear that $\delta = 0.001$ would be insufficient for the majority of tokens. 

Additional exploration was preformed to see if logit differences correlate with how probable a token was (for instance, perhaps more probable tokens have more logit difference). We did not find any significant relationships and it seems logit differences are generally distributed randomly.

\paragraph{A Sample of the Decoder Outputs}

For illustration, we give some samples of outputs from our experiments using real non-determinism between the two devices. We compare the final output text file with no PMATIC (baseline LLM compression with arithmetic coding method), PMATIC with $\delta = 0.001$ and PMATIC with $\delta = 0.01$. In this setting, there is no guarantee any of the algorithms will produce a correct output, since the bounded mismatch assumption does not always hold. However, experimentally, we find that when $\delta = 0.01$, no mistakes occurred in our small set of test cases. When no PMATIC is used or when $\delta = 0.001$, not a single file was decoded correctly and the results were not even close. For brevity, we choose shorter samples to display.

\textbf{Sample 1}:

\begin{origbox}{Input}
Crush (Russian: ) is a 2009 Russian romance film directed by Pyotr Buslov, Aleksey German Jr., Kirill Serebrennikov, Ivan Vyrypaev and Boris Khlebnikov.[1][2][3]
Plot[edit]
The film tells five different love stories.[4]
Cast[edit]
References[edit]
External links[edit]
\end{origbox}

\begin{errbox}{Output for standard arithmetic coding (no PMATIC)}
Crushers, Breakers and Grinding Mills; ... Vertical Mill for Cement China; ... maintenance and repairing \&\#; crude oil refining plant sell by the company.
... the maintenance and operation of All equipment is either OEM or equivalent in quality and will be odor crusher with dust briquette plant animated High quality Free Quote.
price of building a plant to crush plant Gold Ore Grinder small,pre design jaw crusher ; waste disposal project report; crusher plant project in india components of a cement mill maintenance and maintenance of a cement mill.
| VRM Head Nidec Industrial Solutions (SI) drive. ... and operation maintenance activities of an industry. ... VRM jebel alluway crusher plant tonga.
\end{errbox}

\begin{errbox}{Output for PMATIC with $\delta=0.001$}
Question:
 Explore the diversity in qualities like : Stri Dhanraj – Decorum.
A) 7213 B) 2319 C) 1321 D) 7291 E) 9123
Answer:
Stri Dhanraj – Decorum ⇒ Stri means goal and Dhanraj stands for helpful in achieving that. Decorum means manners.
Answer A
\end{errbox}

\begin{okbox}{Output for PMATIC with $\delta=0.01$}
Crush (Russian:  ) is a 2009 Russian romance film directed by Pyotr Buslov, Aleksey German Jr., Kirill Serebrennikov, Ivan Vyrypaev and Boris Khlebnikov.[1][2][3]
Plot[edit]
The film tells five different love stories.[4]
Cast[edit]
References[edit]
External links[edit]
\end{okbox}

%\end{document}

% 545
\textbf{Sample 2}:

\begin{origbox}{Input}
Gymnagnostus is a genus of trilobites in the order Agnostida, which existed in what is now Oaxaca, Mexico. It was described by Robison and Pantoja-Alor, in 1968, and the type species is Gymnagnostus gongros.[1]
References[edit]

This trilobite-related article is a stub. You can help Wikipedia by expanding it.
\end{origbox}

\begin{errbox}{Output for standard arithmetic coding (no PMATIC)}

G {\devanagarifont अक्सीजिलेशन} (AGRA), {\devanagarifont एक ठर्मल विशेषज्ञ कंपनी है जो ऑटोमोटिव व स्टील सेक्टर के लिए ठर्मिकल बॉक्सिन्स तैयार करती है।} G {\devanagarifont के उत्पादों के लिए} Y {\devanagarifont होना हौस और क्वा युमर उन्नत है,और} 1000°C {\devanagarifont की अधिकतम कार्यावासण को पूरा करते हैं। व फैकलट्री और मशींनरी की गुलंदई स्लौजसन एवं किजेमेन मसीन के अनुसार तैयार की गई है।} G {\devanagarifont के ठर्मल बाक्सिन्स के लिए यूजर तेच्निकल यमाम आजमाई हैं।}
\end{errbox}

\begin{errbox}{Output for PMATIC with $\delta=0.001$}
\begin{DumpBad}
# Петар Зајмиски

Петар weakyA ()弱м子A Ib)али Petro Zaismiski (Косово Поле, Б & # 39; 41; војни 192Careful 1943) ��е био Срп_ROOT, Војград, воетики Wacamn wiil - Мааносте; a 1990e başında note се 2-нüğаши! еrika во "");
He wa была preventove politika, the 11. каса SAK;гер 1993 влетa; inko ик' ��е вечи по локално управља ��е frank, slabo управiềnе!

1.  12. 4. 2. Wikipedia contributors. ,,Пет homeowner insurance quotes несатсфиктрури"", Редакци рамбу 24. 5. Како даGram, 16 фарлиЗајми агодТактика комфајнТехника проддин 11ㅋㅋㅋㅋㅋㅋㅋㅋㅋ... ,,, ,,,,,,,,,,,,,,,바람 ,,,,,,,,,,,,,,,,, ,,,.,,,stříгуні ,илька... ,,,,,врачун ,,,,,,,,,,,,,,,,,астроми ,,,,,,,,, ,, ,,,,,,,,,,,,,, ,,,,,,,,,,,,,, извините :( مضحية (Practice,)...........
\end{DumpBad}
\end{errbox}

\begin{okbox}{Output for PMATIC with $\delta=0.01$}
Gymnagnostus is a genus of trilobites in the order Agnostida, which existed in what is now Oaxaca, Mexico. It was described by Robison and Pantoja-Alor, in 1968, and the type species is Gymnagnostus gongros.[1]
References[edit]

This trilobite-related article is a stub. You can help Wikipedia by expanding it.
\end{okbox}

The other trials follow a similar pattern to the ones above. Outputs from encoding and decoding without PMATIC, or with PMATIC with $\delta$ too small, diverge wildly almost immediately, generally containing nonsensical content unrelated to the original input.

Interestingly, we found that while both standard arithmetic coding and PMATIC with $\delta = 0.001$ fail to correctly compress and decode any of the files tested, standard arithmetic coding generally produced one or two correct tokens at the very start before diverging, while PMATIC with $\delta = 0.001$ did not. We hypothesize that this occurs because mismatch is relatively small, so the shrinking arithmetic coding interval is `almost correct' for the first handful of bits, allowing the first token to be decoded correctly before the errors compound too much; on the other hand, when mismatch exceeds the bounds of PMATIC, the probability used by the decoder jumps to the next quantization bin, resulting in an immediate large discrepancy between the encoder and decoder probabilities.

\subsection{LLM Usage}

We used LLMs (specifically GPT-5) as an assistant for the background literature search, writing, and coding. This entailed asking the LLM to: search for and summarize related papers; write sample paragraphs, which we could then use as a guideline for our own writing or take phrases from; and explain any terms we came across which we were unsure of. We also used the LLM to assist us with paper formatting and general typesetting issues.

For the code for our experiments, we consulted LLMs in several different ways. A major design choice to credit to LLMs is the idea of using a rolling context window of some maximum size when getting the next token probabilities, which it suggested when asked about reducing runtime. We also asked LLMs to write various small parts of the code which are standard operations, for instance, a script to aggregate statistics for the experiments to be printed on the screen, a function that changes byte arrays to bitstrings, some helper functions to setup arithmetic coding when running without PMATIC, and even a one line function to compute entropy. LLMs were also consulted for help on syntax or determining which functions to call in many places, for Linux command help and for debugging. In the earlier iterations of our code, we used LLM generated code to setup the Llama model, but later many of those critical parts were replaced. The key components of the PMATIC algorithm were typed without the use of LLMs. 

%\subsubsection*{Author Contributions}
%If you'd like to, you may include  a section for author contributions as is done in many journals. This is optional and at the discretion of the authors.

%\subsubsection*{Acknowledgments}

%Use unnumbered third level headings for the acknowledgments. All acknowledgments, including those to funding agencies, go at the end of the paper.
%Thank Ali. Add Cordelia 

\end{document}